\newtheorem{protocol}{Protocol}
\newcommand{ \C }{ \mathcal C }
\newcommand{ \Z }{ \mathbb{Z} }
\newcommand{ \N }{ \mathbb{N} }
\newcommand{ \mya }{ \boldsymbol{a} }
\newcommand{ \myb }{ \boldsymbol{b} }
\newcommand{ \mye }{ \boldsymbol{e} }
\newcommand{ \myh }{ \boldsymbol{h} }
\newcommand{ \myH }{ \boldsymbol{H} }
\newcommand{ \myR }{ \boldsymbol{R} }
\newcommand{ \mys }{ \boldsymbol{s} }
\newcommand{ \myHH }{ \boldsymbol{\tilde{H}} }
\newcommand{ \myx }{ \boldsymbol{x} }
\newcommand{ \myy }{ \boldsymbol{y} }
\newcommand{ \bs }[1]{ \boldsymbol{#1} }
\newcommand{ \sml }{ \mathsmaller }
\newcommand{ \tnsp }{ {\sml{\sf T}} }
\newcommand{ \wtl }{ \operatorname{wt}_{\textsc{l}} }
\newcommand{ \distl }{ \operatorname{d}_{\textsc{l}} }
\newcommand{ \sign }{ \operatorname{sign} }
\newcommand{ \com }{ \operatorname{Com} }
\newcommand{ \defeq }{ \coloneqq }
\newcommand{ \myqed }{ \hfill $\blacktriangle$ }
\newcommand{\rvline}{\hspace*{-\arraycolsep}\vline\hspace*{-\arraycolsep}}
\begin{document}

\title{A Zero-Knowledge Proof for the Syndrome Decoding Problem in the {L}ee Metric}


\author{Mladen~Kova\v{c}evi\'{c}\thanks{Correspondence: \email{kmladen@uns.ac.rs}}\inst{1}
\and Tatjana~Grbi\'{c}\inst{1}
\and Darko~\v{C}apko\inst{1,2}
\and Nemanja~Nedi\'{c}\inst{1,2}
\and Sr\dj an~Vukmirovi\'{c}\inst{1,2}}

\authorrunning{Kova\v{c}evi\'{c} et al.}

\institute{Faculty of Technical Sciences, University of Novi Sad, Serbia \and
           Ethernal, Novi Sad, Serbia}

\maketitle

\begin{abstract}
The syndrome decoding problem is one of the NP-complete problems lying at the
foundation of code-based cryptography.
The variant thereof where the distance between vectors is measured with respect
to the Lee metric, rather than the more commonly used Hamming metric, has been
analyzed recently in several works due to its potential relevance for building
more efficient code-based cryptosystems.
The purpose of this article is to present a zero-knowledge proof of knowledge
for this variant of the problem.

\keywords{zero-knowledge proof \and proof of knowledge \and identification scheme
\and code-based cryptography \and post-quantum cryptography \and syndrome decoding
\and Lee metric\\\\
{\bf MSC 2020:} 94A60, 68P25}
\end{abstract}

\section{Introduction}
\label{sec:intro}

Code-based cryptography \cite{weger_book} is an area of study devoted to designing
and analyzing cryptographic primitives whose security rests on the assumptions of
intractability of certain computational problems arising in the theory of
error-correcting codes.
It originated from the works of McEliece \cite{mceliece} and Niederreiter \cite{niederreiter},
and is today one of the most promising and well-studied approaches to post-quantum
cryptography \cite{NIST}, so called because the corresponding hardness assumptions
are conjectured to hold even against adversaries having access to quantum computers.

In this paper, we focus on a computational problem called syndrome decoding of
random linear codes, which is one of the cornerstones of code-based cryptography.
In the literature, mainly the Hamming-metric version of the syndrome decoding
problem, whose NP-completeness was shown in \cite{berlekamp,barg}, has been studied.
In particular, zero-knowledge proofs and identification schemes for this and related
problems were pioneered by Stern \cite{stern}, and have been investigated extensively
later on, including some very recent works \cite{baldi,bidoux,feneuil}.
Lately, however, motivated by the possibility of reducing key sizes and building
more efficient code-based cryptosystems, other metrics have been explored as well
\cite{gaborit+zemor,puchinger}, in particular the Lee metric that we are currently
interested in \cite{chailloux,chailloux2,horlemann,horlemann2,li+wang,weger}.

The purpose of this article is to present a zero-knowledge proof of knowledge for
the Lee-metric version of the syndrome decoding problem.
The proof we shall describe falls within the permutation-based framework, a general
approach that has been previously applied to several problems, including the syndrome
decoding problem in the Hamming metric \cite{stern}, the knapsack and the subset-sum
problems \cite{shamir,blocki}, the short integer solution (SIS) problem \cite{ling}, etc.
The high-level idea behind it consists of revealing a permuted version of the secret,
while showing that the secret satisfies the desired linear relation using masking-based
combinatorial tricks.
While revealing a permuted version of the secret leaks no secret information in the case
of the Hamming-metric syndrome decoding problem or the subset-sum problem, it may be an
issue for other problems such as the Lee-metric version of syndrome decoding.
One of the main contributions of this article thus consists of rephrasing the Lee-metric
syndrome decoding problem into a form compatible with the permutation-based framework.

Zero-knowledge proofs are objects of great theoretical and practical interest
\cite{goldreich_book,thaler}.
Introduced in \cite{zkp}, they have found numerous applications ranging from
theoretical computer science to real-world security solutions.
The potential and usefulness of zero-knowledge protocols, especially those that
may be used as building blocks of quantum-resistant cryptographic primitives,
serve as the main motivation for the present work.
As an illustration of their usefulness, we note that several modern code-based
signature schemes are built on zero-knowledge protocols, e.g., CROSS \cite{baldi_NIST}
and LESS \cite{baldi_NIST2}, both included in the second round of the Additional
Digital Signature Schemes call in the NIST Post-Quantum Cryptography Standardization
Process.

\subsection*{Notation}

The ring of integers modulo $ m \geqslant 2 $, with standard modular operations
of addition and multiplication, is denoted by $ \Z_m \equiv \Z / m\Z $.
To simplify the exposition, we shall assume throughout the paper that $ m $ is an
odd number larger than $ 3 $:%
\footnote{For $ m = 2 $ and $ m = 3 $, the Lee metric on $ \Z_m^n $ is identical
to the Hamming metric, and zero-knowledge proofs for the syndrome decoding problem
in the Hamming metric are well-known \cite{stern}.}
\begin{align}
\label{eq:ml}
  m = 2\ell + 1 , \quad  \ell \geqslant 2 .
\end{align}
We emphasize, however, that the results are valid for even $ m $ as well.
The elements of $ \Z_m $ are assumed to be represented by the integers in the set
\begin{align}
\label{eq:mod}
  \left\{-\ell, \ldots, -1, 0, 1, \ldots, \ell \right\} .
\end{align}
This representation will be more convenient than the more common $ \{0, 1, \ldots, 2\ell\} $
for our purpose.

Vectors and matrices are written in boldface.
All vectors are thought of as row vectors.
The vectors consisting of all zeros (resp. ones) are denoted by $ \bs{0} $ (resp.
$ \bs{1} $), or $ \bs{0}_n $ (resp. $ \bs{1}_n $) if their length needs to be emphasized.
The $ n \times k $ matrix of all zeros is denoted by $ \bs{0}_{n\times k} $, and
the $ n \times n $ identity matrix by $ \bs{I}_n $.
For two vectors $ \bs{x} = (x_1, \ldots, x_n) \in \Z_m^n $,
$ \bs{y} = (y_1, \ldots, y_k) \in \Z_m^k $, their concatenation is a vector in
$ \Z_m^{n+k} $ denoted by $ (\bs{x} \, \vert\, \bs{y}) = (x_1, \ldots, x_n, y_1, \ldots, y_k) $.

The symbol $ \sum $ denotes the real sum.
To avoid confusion, we shall write $ (\operatorname{mod} m) $ whenever we want
to emphasize that operations are performed modulo $ m $.

\section{Codes in the Lee metric and the syndrome decoding problem}
\label{sec:lee}

The Lee weight \cite{lee,ulrich,chiang+wolf} of a vector
$ \myx = (x_1, \ldots, x_n) \in \Z_{m}^n $, under our convention \eqref{eq:mod}
that the elements $ x_i $ are represented by the integers in the range
$ -\ell, \ldots, \ell $, is defined as
\begin{equation}
\label{eq:wtl}
  \wtl(\myx) \defeq \sum_{i=1}^n |x_i| .
\end{equation}
In other words, the Lee weight of a vector is a modular version of its $ L_1 $
(or Manhattan) norm.
The maximum possible Lee weight of a vector in $ \Z_m^n $ is
\begin{align}
\label{eq:maxwtl}
  \max_{\bs{x} \in \Z_m^n}  \wtl(\bs{x})  =  n \ell .
\end{align}
The Lee distance between $ \myx, \myy \in \Z_m^n $ is defined as
\begin{equation}
\label{eq:dl}
  \distl(\myx, \myy) \defeq \wtl(\myx - \myy) .
\end{equation}

A code of length $ n $ over $ \Z_m $ is a nonempty subset of $ \Z_m^n $ of
cardinality at least two.
We say that $ \C \subseteq \Z_m^n $ is a linear code over $ \Z_m $ if it is a
$ \Z_m $-submodule of $ \Z_m^n $.
A matrix $ \bs{G} $ is said to be a generator matrix of a linear code $ \C $ if its
row-space equals $ \C $.
A matrix $ \bs{P} $ is said to be a parity-check matrix of a linear code $ \C $ if
its kernel equals $ \C $, i.e., if
\begin{align}
\label{eq:code}
  \C = \big\{ \bs{x} \in \Z_m^n \colon \bs{x} \bs{P}^{\tnsp} = \bs{0} \pmod{m} \big\} .
\end{align}
Generator and parity-check matrices of linear codes always exist, but are not unique.

We next state (the decision version) of the computational problem known as syndrome
decoding.
This problem has been studied in the literature mostly for the Hamming-metric case,
but it can naturally be considered in other settings as well.
We state it for the case we are currently interested in -- codes in the module
$ \Z_m^n $ endowed with the Lee metric.
This version of the problem has attracted some attention recently and is known to
be NP-complete \cite{weger}.

\begin{problem}[Lee syndrome decoding]
\label{problem:syn}
Given $ \myH \in \Z_m^{n \times (n-k)} $, $ \mys \in \Z_m^{n-k} $, and $ w \in \N $,
determine whether there is a vector $ \mye \in \Z_m^n $ satisfying the following
two conditions:
\begin{itemize}[leftmargin=2em,noitemsep,topsep=0pt]
\item[1.)]
$ \mye \myH = \mys \pmod{m} $,
\item[2.)]
$ \wtl(\mye) \leqslant w $.\myqed
\end{itemize}
\end{problem}

Problem~\ref{problem:syn} is motivated by the communication scenario in which the task
of the decoder is to find the codeword $ \bs{x} \in \C $ that is closest in Lee distance
to the received vector $ \bs{y} = \bs{x} + \bs{e} \pmod{m} $, the assumption being that
this is the most likely codeword to have been transmitted.
Since $ \distl(\bs{x}, \bs{y}) = \wtl(\bs{e}) $ (see~\eqref{eq:dl}), this is equivalent
to finding the error vector $ \bs{e} $ of smallest Lee weight such that $ \bs{y} - \bs{e} \in \C $.
To this end the decoder can compute the \emph{syndrome}
$ \bs{s} \defeq \bs{y} \bs{P}^{\tnsp} = \bs{e} \bs{P}^{\tnsp} \pmod{m} $ (see~\eqref{eq:code}),
where $ \bs{P} $ is the parity-check matrix of the code $ \C $, and search for the
error vector $ \bs{e} $ of smallest weight satisfying $ \bs{e} \bs{P}^{\tnsp} = \bs{s} \pmod{m} $.
The decision version of the problem just described is precisely Problem~\ref{problem:syn}.
(In the statement of Problem~\ref{problem:syn} and throughout the paper, we use the
notation $ \myH $ instead of $ \bs{P}^{\tnsp} $ for simplicity.)

\section{The balanced Lee syndrome decoding problem}

In this section, we introduce a minor modification of Problem~\ref{problem:syn}, which
will be needed for technical reasons but does not incur a loss in generality.

We say that a vector $ \mye = (e_1, \ldots, e_n) \in \Z_m^n $ is \emph{balanced} if the
weight of its positive entries equals the weight of its negative entries (recall
\eqref{eq:mod}), namely
\begin{align}
  \sum_{i \colon e_i > 0} |e_i|  =  \sum_{i \colon e_i < 0} |e_i| ,
\end{align}
which is equivalent to saying that $ \sum_{i=1}^n e_i = 0 $.
The Lee weight of a balanced vector is necessarily even, i.e., $ \wtl(\mye) \in 2 \N $.

\begin{problem}[Balanced Lee syndrome decoding]
\label{problem:balsyn}
Given $ \myH \in \Z_m^{n \times (n-k)} $, $ \mys \in \Z_m^{n-k} $, and $ w \in 2\N $
with $ w \leqslant n(\ell-1) $, determine whether there is a vector $ \mye \in \Z_m^n $
satisfying the following conditions:
\begin{itemize}[leftmargin=2em,noitemsep,topsep=0pt]
\item[1.)]
$ \mye \myH = \mys  \pmod{m} $,
\item[2.)]
$ \wtl(\mye) \leqslant w $,
\item[3.)]
$ \sum_{i=1}^n e_i = 0 $.\myqed
\end{itemize}
\end{problem}

Notice that we have also incorporated the condition $ w \leqslant n(\ell-1) $ in the
statement of the problem.
This condition will also be needed for technical reasons.

Neither the assumption that $ w \leqslant n(\ell-1) $, nor the requirement that the
vector $ \bs{e} $ is balanced, significantly affect the computational difficulty of
the syndrome decoding problem.
This fact is formalized in the following proposition.

\begin{proposition}
\label{thm:NPcomplete}
The balanced Lee syndrome decoding problem is NP-complete.
\end{proposition}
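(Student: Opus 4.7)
The plan is to prove Proposition~\ref{thm:NPcomplete} in the two standard parts. Membership in NP is immediate: each of the three defining conditions of Problem~\ref{problem:balsyn} (the syndrome equation, the Lee-weight bound, and the zero coordinate-sum) can be verified in polynomial time on a given candidate $\bs{e}$. For NP-hardness, I would reduce from Problem~\ref{problem:syn}, which is NP-complete by \cite{weger}.

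The central device is a \emph{doubling} construction. Given $(\myH, \mys, w)$ with $\myH \in \Z_m^{n \times (n-k)}$, I would form
\[
  \myH' \defeq \begin{pmatrix}\myH & \bs{0} \\ \bs{0} & \myH\end{pmatrix},
  \qquad
  \mys' \defeq (\mys \,\vert\, -\mys),
  \qquad
  w' \defeq 2w,
\]
which is a Problem~\ref{problem:balsyn} candidate-instance of length $2n$; the bound $w' = 2w$ is automatically even. In the easy direction, if $\bs{e}$ solves the original instance, then $\bs{e}' \defeq (\bs{e} \,\vert\, -\bs{e})$ solves the transformed one: the syndrome factorizes as $(\mys \,\vert\, -\mys)$ via the block structure, the Lee weight equals $2\wtl(\bs{e}) \leqslant 2w$, and the coordinate sum is zero.

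The converse is the main step, and it is where the structure of the construction becomes essential. Assuming $\bs{e}' = (\bs{e}_1 \,\vert\, \bs{e}_2)$ is balanced with total Lee weight at most $2w$ and satisfies $\bs{e}_1 \myH = \mys$, $\bs{e}_2 \myH = -\mys$, the pigeonhole principle gives $\min(\wtl(\bs{e}_1), \wtl(\bs{e}_2)) \leqslant w$; and since $\wtl(-\bs{e}_2) = \wtl(\bs{e}_2)$ with $(-\bs{e}_2)\myH = \mys$, either $\bs{e}_1$ or $-\bs{e}_2$ (whichever is lighter) is a valid solution to the original instance. The balance condition itself is used only to certify that the transformed instance indeed lies in Problem~\ref{problem:balsyn}; it is not invoked in the extraction.

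The one technical wrinkle I foresee is the side-condition $w' \leqslant n'(\ell-1)$ of Problem~\ref{problem:balsyn}, which could fail when $w$ is close to the trivial upper bound $n\ell$. A harmless preprocessing step---appending $n_0 \defeq \lceil n/(\ell-1)\rceil$ all-zero rows to $\myH$ before doubling---preserves the problem (the extra coordinates have no effect on the syndrome and may be set to zero in any solution) and guarantees $2w \leqslant 2(n+n_0)(\ell-1)$ by the choice of $n_0$. The principal obstacle overall is getting the extracted component to have Lee weight at most $w$, not merely $2w$; the doubling symmetry $(\bs{e} \,\vert\, -\bs{e})$ combined with the pigeonhole argument is precisely what addresses this.
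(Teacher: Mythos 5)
Your proof is correct, and while it shares the paper's central idea---doubling the instance so that the canonical witness $(\bs{e}\,\vert -\!\bs{e})$ is automatically balanced, plus a padding step to secure the side-condition $w \leqslant n(\ell-1)$---the construction and, more importantly, the extraction argument are genuinely different. The paper stacks $\bs{\bar{H}}$ over $-\bs{\bar{H}}$ (keeping a single copy of the syndrome, targeted at $2\bs{\bar{s}}$) and extracts a solution from $\bs{g}=(\bs{g}_{\textnormal{l}}\,\vert\,\bs{g}_{\textnormal{r}})$ as $2^{-1}(\bs{g}_{\textnormal{l}}-\bs{g}_{\textnormal{r}})$; this requires $\gcd(m,2)=1$ (the paper's Remark 7 patches the even case separately) and, since halving modulo $m$ can inflate Lee weight, does not by itself certify the bound $\wtl(\cdot)\leqslant w$ on the recovered vector. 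You instead use the block-diagonal matrix $\begin{pmatrix}\myH & \bs{0}\\ \bs{0} & \myH\end{pmatrix}$ with syndrome $(\mys\,\vert -\!\mys)$, so the two halves decouple and a pigeonhole argument hands you a half of Lee weight at most $w$ that (after negating if necessary) solves the original instance; this controls the weight cleanly, never divides by $2$, and hence works verbatim for even $m$. Two small points to tighten: your claim that $2w\leqslant 2(n+n_0)(\ell-1)$ implicitly normalizes $w\leqslant n\ell$ (harmless, since any larger $w$ is equivalent to $w=n\ell$ by \eqref{eq:maxwtl}, and the paper makes the same implicit assumption); and your zero-row padding, unlike the paper's identity-block padding, does not force the auxiliary coordinates of a solution to vanish---but discarding them only decreases Lee weight and leaves the syndrome untouched, so your extraction still goes through.
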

\begin{proof}
Let $ \myH \in \Z_m^{n\times (n-k)} $, $ \mys \in \Z_m^{n-k} $, $ w \in \N $ be an
instance of Problem~\ref{problem:syn}.
Let
\begin{align}
  \bar{n} = n + \Big\lceil \frac{n}{\ell-1} \Big\rceil
\end{align}
and note that $ \bar{n} \geqslant \frac{\ell}{\ell-1} n $.
Define $ \bs{\bar{H}} \in \Z_m^{\bar{n}\times (\bar{n}-k)} $ and $ \bs{\bar{s}} \in \Z_m^{\bar{n}-k} $
by:
\begin{align}
  \bs{\bar{H}} =
  \begin{pmatrix}
    \myH
    & \rvline & \, \bs{0}_{\sml{n \times \lceil\!\frac{n}{\ell-1}\!\rceil}} \\
  \hline
    \bs{0}_{\sml{\lceil\!\frac{n}{\ell-1}\!\rceil \times (n-k)}} \, & \rvline &
    \bs{I}_{\sml{\lceil\!\frac{n}{\ell-1}\!\rceil}}
  \end{pmatrix} , \qquad
	\bs{\bar{s}} = \big(\bs{s} \, \vert\, \bs{0}_{\sml{\lceil\!\frac{n}{\ell-1}\!\rceil}}\big) .
\end{align}
Then a vector $ \bs{\bar{e}} \in \Z_m^{\bar{n}} $ is a solution to
$ \bs{\bar{e}} \bs{\bar{H}} = \bs{\bar{s}} \pmod{m} $ if and only if it is of the form
$ \bs{\bar{e}} = \big(\bs{e} \,|\, \bs{0}_{\sml{\lceil\!\frac{n}{\ell-1}\!\rceil}}\big) $
and $ \bs{e} \bs{H} = \bs{s} \pmod{m} $.
Moreover,
\begin{align}
\label{eq:wtcond}
  \wtl(\bs{\bar{e}}) = \wtl(\bs{e}) \leqslant n \ell \leqslant \bar{n} (\ell - 1) .
\end{align}

Now define the matrix
\begin{align}
\label{eq:Hpm}
  \bs{H^{\sml{\pm}}} =
  \begin{pmatrix}
     \bs{\bar{H}}  \\
  \hline
     -\bs{\bar{H}}\,
  \end{pmatrix} .
\end{align}
The following argument will establish that $ (\myH, \mys, w) $ is a `yes' instance
of the Lee syndrome decoding problem if and only if $ (\bs{H^{\sml{\pm}}}, 2\bs{\bar{s}}, 2w) $
is a `yes' instance of the balanced Lee syndrome decoding problem, which will prove
the proposition.
To show the `only if' direction, simply notice that
\begin{align}
  \bs{e} \bs{H} = \bs{s}
	\quad \Longrightarrow \quad
	\bs{\bar{e}} \bs{\bar{H}} = \bs{\bar{s}}
  \quad \Longrightarrow \quad
  (\bs{\bar{e}} \, \vert -\!\bs{\bar{e}}) \bs{H^{\sml{\pm}}} = 2 \bs{\bar{s}} ,
\end{align}
that the vector $ (\bs{\bar{e}} \, \vert -\!\bs{\bar{e}}) $ is balanced, that its
length is $ 2 \bar{n} $, and that, by \eqref{eq:wtcond}, its weight satisfies
\begin{align}
  \wtl((\bs{\bar{e}} \, \vert -\!\bs{\bar{e}})) = 2 \wtl(\bs{\bar{e}}) \leqslant 2 \bar{n} (\ell - 1) .
\end{align}
Therefore, if $ \bs{e} $ is a solution to the instance $ (\myH, \mys, w) $ of
Problem~\ref{problem:syn}, then $ (\bs{\bar{e}} \, \vert -\!\bs{\bar{e}}) $ is a
solution to the instance $ (\bs{H^{\sml{\pm}}}, 2\bs{\bar{s}}, 2w) $ of Problem~\ref{problem:balsyn}.
For the `if' direction, suppose that $ \bs{g} \in \Z_m^{2\bar{n}} $ satisfies
$ \bs{g} \bs{H^{\sml{\pm}}} = 2 \bs{\bar{s}} \pmod{m} $, and denote by
$ \bs{g}_{\textnormal{l}}, \bs{g}_{\textnormal{r}} \in \Z_m^{\bar{n}} $
the left and right half of $ \bs{g} $, i.e.,
$ \bs{g} = (\bs{g}_{\textnormal{l}} \,\vert\, \bs{g}_{\textnormal{r}}) $.
Then, taking into account the form of $ \bs{H^{\sml{\pm}}} $ (see \eqref{eq:Hpm})
and dividing by $ 2 $ (which is possible because $ \gcd(m,2) = 1 $ by the assumption
\eqref{eq:ml}), we get
$ 2^{-1}(\bs{g}_{\textnormal{l}} - \bs{g}_{\textnormal{r}}) \bs{\bar{H}} = \bs{\bar{s}} \pmod{m} $.
As argued in the first paragraph of the proof, the vector
$ 2^{-1}(\bs{g}_{\textnormal{l}} - \bs{g}_{\textnormal{r}}) \in \Z_m^{\bar{n}} $ is
then necessarily of the form
$ \big(\bs{e} \,|\, \bs{0}_{\sml{\lceil\!\frac{n}{\ell-1}\!\rceil}}\big) $, with
$ \bs{e} \in \Z_m^n $ satisfying $ \bs{e} \bs{H} = \bs{s} \pmod{m} $.
\hfill $ \blacksquare $
\end{proof}

The protocol we shall describe in the following section requires restating the
balanced Lee syndrome decoding problem in an appropriate way.
Given $ \bs{H} \in \Z_m^{n \times (n-k)} $, let $ \myHH \in \Z_m^{n\ell \times (n-k)} $
be the matrix obtained by repeating each of the row vectors of $ \myH $ exactly
$ \ell = \lfloor m/2 \rfloor $ times.
Namely, if $ \myh_{1}, \ldots, \myh_{n} $ are the row vectors of $ \myH $, we set%
\begin{align}
\label{eq:Htilde}
  \myH =
	\begin{pmatrix}
  \myh_{1}  \\
	\vdots  \\
	\myh_{n}  \\
	\end{pmatrix}
	\quad  \longrightarrow  \quad
  \myHH =
	\begin{pmatrix}
  \myh_{1}  \\
	\vdots  \\
	\myh_{1}  \\
	\vdots  \\
	\myh_{n}  \\
	\vdots  \\
	\myh_{n}
	\end{pmatrix}.
\end{align}
Proposition~\ref{thm:equiv} below establishes that Problem \ref{problem:balsyn} can
equivalently be stated as follows.

\begin{problem}[Balanced Lee syndrome decoding -- alternative formulation]
\label{problem:alt}
Given $ \myH \in \Z_m^{n \times (n-k)} $, $ \mys \in \Z_m^{n-k} $, and $ w \in 2\N $
with $ w \leqslant n(\ell-1) $, determine whether there is a vector $ \bs{f} \in \{-1, 0, 1\}^{n\ell} $
satisfying the following conditions:
\begin{itemize}[leftmargin=2em,noitemsep,topsep=0pt]
\item[1.)]
$ \bs{f} \myHH = \mys  \pmod{m} $, where $ \myHH $ is the matrix defined in \eqref{eq:Htilde},
\item[2.)]
$ \sum_{j=1}^{n\ell} |f_j| = w $,
\item[3.)]
$ \sum_{j=1}^{n\ell} f_j = 0 $.\myqed
\end{itemize}
\end{problem}

\begin{proposition}
\label{thm:equiv}
Fix $ \myH \in \Z_m^{n\times (n-k)} $, $ \mys \in \Z_m^{n-k} $, and $ w \in 2\N $ with
$ w \leqslant n(\ell-1) $.
Let $ \myHH $ be the matrix defined in \eqref{eq:Htilde}.
Then there exists a vector $ \mye \in \Z_m^n $ satisfying $ \mye \myH = \mys \pmod{m} $,
$ \wtl(\mye) \leqslant w $, and $ \sum_{i=1}^n e_i = 0 $, if and only if there exists
a vector $ \bs{f} \in \{-1, 0, 1\}^{n\ell} $ satisfying $ \bs{f} \myHH = \mys \pmod{m} $,
$ \wtl(\bs{f}) = w $, and $ \sum_{j=1}^{n\ell} f_j = 0 $.
\end{proposition}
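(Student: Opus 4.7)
The plan is to establish both directions explicitly, with the easy direction being the decoding $\bs{f} \mapsto \mye$ via column summation, and the harder direction being the encoding $\mye \mapsto \bs{f}$ which requires padding up to exactly weight $w$.

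For the backward direction ($\bs{f} \Rightarrow \mye$), I would partition the coordinates of $\bs{f}$ into $n$ blocks of length $\ell$ corresponding to the $\ell$ repetitions of each row $\myh_i$, and define $e_i \defeq \sum_{j=1}^{\ell} f_{i,j}$ computed as an integer sum. Since each $f_{i,j} \in \{-1,0,1\}$, we have $e_i \in \{-\ell,\ldots,\ell\}$, so $\mye$ is a legitimate element of $\Z_m^n$ under the representation \eqref{eq:mod}. All three conditions then follow by routine calculation: $\mye\myH = \sum_i e_i \myh_i = \sum_{i,j} f_{i,j} \myh_i = \bs{f}\myHH = \mys \pmod m$; $\wtl(\mye) = \sum_i |e_i| \leqslant \sum_{i,j} |f_{i,j}| = w$; and $\sum_i e_i = \sum_{i,j} f_{i,j} = 0$.

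For the forward direction ($\mye \Rightarrow \bs{f}$), I would first build a natural candidate $\bs{f}^{(0)}$ by placing $|e_i|$ copies of $\sign(e_i)$ in the block corresponding to row $i$, with the remaining $\ell - |e_i|$ entries set to zero. This immediately gives $\bs{f}^{(0)}\myHH = \mye\myH = \mys \pmod m$, $\wtl(\bs{f}^{(0)}) = \wtl(\mye)$, and $\sum_j f^{(0)}_j = \sum_i e_i = 0$, but the weight may be strictly less than $w$. The remaining task is to pad $\bs{f}^{(0)}$ up to exact weight $w$ without disturbing the syndrome or the balanced condition. The key observation is that replacing two zero entries within the same block by $+1$ and $-1$ adds $\myh_i - \myh_i = \bs{0}$ to $\bs{f}\myHH$, adds $0$ to $\sum_j f_j$, and adds exactly $2$ to the Lee weight. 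So the question reduces to whether we can distribute $(w - \wtl(\mye))/2$ such pairs across the $n$ blocks, which is an integer $\geqslant 0$ because $\wtl(\mye)$ is even (as $\mye$ is balanced) and $\wtl(\mye) \leqslant w$.

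The main obstacle, and the reason both technical assumptions in Problem~\ref{problem:balsyn} are invoked here, is verifying that enough padding capacity exists. Block $i$ admits at most $\lfloor (\ell - |e_i|)/2 \rfloor$ pairs, so the total pair-capacity is
\begin{align}
  \sum_{i=1}^n \left\lfloor \frac{\ell - |e_i|}{2} \right\rfloor
  \;\geqslant\; \sum_{i=1}^n \frac{(\ell - |e_i|) - 1}{2}
  \;=\; \frac{n(\ell - 1) - \wtl(\mye)}{2}
  \;\geqslant\; \frac{w - \wtl(\mye)}{2},
\end{align}
where the last inequality uses precisely the hypothesis $w \leqslant n(\ell-1)$. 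Thus the required number of padding pairs can always be inserted, producing a vector $\bs{f} \in \{-1,0,1\}^{n\ell}$ with $\wtl(\bs{f}) = w$ exactly, completing the proof.
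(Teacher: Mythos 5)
Your proof is correct and follows essentially the same route as the paper's: block-wise accumulation $e_i = \sum_{j=(i-1)\ell+1}^{i\ell} f_j$ for the backward direction, and for the forward direction the same sign-expansion into blocks followed by insertion of $+1,-1$ pairs into zero positions of a common block. The only (cosmetic) difference is that you justify the padding step by a global capacity count $\sum_{i=1}^n \lfloor (\ell-|e_i|)/2 \rfloor \geqslant (w-\wtl(\mye))/2$, whereas the paper argues greedily that as long as the current weight is below $n(\ell-1)$ some block still contains two zeros; both arguments use the hypothesis $w \leqslant n(\ell-1)$ in the same way.
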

\begin{proof}
We first prove the `only if' direction.
Let $ \mye $ be the vector satisfying $ \mye \myH = \mys \pmod{m} $,
$ \wtl(\mye) \leqslant w $, and $ \sum_{i=1}^n e_i = 0 $.
Define $ \bs{e'} \in \{-1, 0, 1\}^{n \ell} $ as the vector obtained by replacing
each coordinate of $ \mye = (e_1, \ldots, e_n) \in \Z_m^n $ with a block from
$ \{-1, 0, 1\}^{\ell} $ consisting of $ |e_i| $ $ 1 $'s (resp.\ $ -1 $'s) if
$ e_i > 0 $ (resp.\ $ e_i < 0 $), followed by $ \ell - |e_i| $ zeros.
In other words, recalling the definition of the sign function
\begin{equation}
  \sign(x) \defeq
	\begin{cases}
	  +1 , &\textnormal{if}\ x > 0  \\
		\phantom{+}0  , &\textnormal{if}\ x = 0  \\
		-1 , &\textnormal{if}\ x < 0
	\end{cases}
\end{equation}
we set, for each $ i = 1, \ldots, n $,
\begin{subequations}
\begin{alignat}{3}
  e'_{(i-1) \ell + 1}          &= \ldots  &&= e'_{(i-1) \ell + |e_i|}  &&= \sign(e_i)  \\
	e'_{(i-1) \ell + |e_i| + 1}  &= \ldots  &&= e'_{i \ell}              &&=  0 .
\end{alignat}
\end{subequations}
Henceforth, the coordinates $ e'_{(i-1) \ell + 1}, \ldots, e'_{i \ell} $ will together
be called the $ i $'th \emph{block} of $ \bs{e'} $.
Note that
\begin{align}
  \sum_{i=1}^n e_i = 0  \quad \Longleftrightarrow \quad  \sum_{j=1}^{n \ell} e'_j = 0 ,
\end{align}
so that $ \bs{e'} $ is also balanced,
\begin{align}
\label{eq:weightf}
  \wtl(\mye) = \sum_{i=1}^n |e_i| = \sum_{j=1}^{n\ell} |e'_j| = \wtl(\bs{e'})  ,
\end{align}
and
\begin{align}
\label{eq:eg}
  \mye  \myH = \bs{e'} \myHH .
\end{align}
The latter is due to the fact that
\begin{equation}
\label{eq:sumhi}
  e_i \myh_{i} =
	\begin{cases}
	  \underbrace{+\myh_{i} + \ldots + \myh_{i}}_{|e_i|} ,  &\textnormal{if}\ e_i > 0  \\
		\underbrace{-\myh_{i} - \ldots - \myh_{i}}_{|e_i|} ,  &\textnormal{if}\ e_i < 0
  \end{cases}
\end{equation}
and that $ |e_i| \leqslant \ell $ (see \eqref{eq:mod}).
Hence, $ \bs{e'} $ satisfies the claimed properties except that its Lee weight
is $ \leqslant\! w $, but not necessarily $ =\! w $ (see \eqref{eq:weightf}).
We therefore modify $ \bs{e'} $ to obtain a vector $ \bs{e''} $ that satisfies the
weight property as well.
If $ \wtl(\bs{e'}) = w $, we set $ \bs{e''} = \bs{e'} $.
Otherwise, if $ \wtl(\bs{e'}) < w $, we find a block in $ \bs{e'} $ that contains
at least two zeros and replace the leftmost two zeros in that block with the pair
$ +1, -1 $.
We then repeat this procedure until a vector of Lee weight $ w $ is obtained.
Note that it is possible to do this because blocks are of size $ \ell \geqslant 2 $
(see \eqref{eq:ml}) and because a pair of zeros belonging to the same block must
exist in a vector of weight $ <\! n(\ell-1) $ (this is the reason we imposed the
condition $ w \leqslant n(\ell-1) $).
Clearly, $ \bs{e''} $ is also balanced and
\begin{align}
  \wtl(\bs{e''}) = \sum_{j=1}^{n\ell} |e''_j| = w .
\end{align}
In other words, $ \bs{e''} $ has exactly $ w/2 $ ones and $ w/2 $ negative ones.
Furthermore, we have
\begin{align}
\label{eq:gf}
	\bs{e'} \myHH = \bs{e''} \myHH ,
\end{align}
which follows from the fact that the product $ \bs{e''} \myHH $, compared to $ \bs{e'} \myHH $,
just contains additional terms of the form $ \pm \myh_{i} $.
Therefore, we may take $ \bs{e''} $ as the vector $ \bs{f} $ from the statement of the
proposition.

To prove the `if' direction, let $ \bs{f} \in \{-1, 0 , 1\}^{n\ell} $ be any vector
satisfying $ \bs{f} \myHH = \mys \pmod{m} $, $ \sum_{j=1}^{n\ell} |f_j| = w $, and
$ \sum_{j=1}^{n\ell} f_j = 0 $.
Define the vector $ \mye = (e_1, \ldots, e_n) $ as the block-wise accumulation of $ \bs{f} $,
namely
\begin{align}
  e_i = \sum_{j=(i-1) \ell + 1}^{i\ell} f_j , \qquad i = 1, \ldots, n ,
\end{align}
and note that $ e_i \in \{-\ell, \ldots, \ell\} $.
Then
\begin{align}
  \bs{e} \myH = \bs{f} \myHH = \mys \pmod{m} ,
\end{align}
\begin{align}
	\sum_{i=1}^n e_i = \sum_{i=1}^n \sum_{j=(i-1) \ell + 1}^{i\ell} f_j = \sum_{j=1}^{n\ell} f_j = 0 ,
\end{align}
and
\begin{align}
  \wtl(\mye) = \sum_{i=1}^n |e_i| = \sum_{i=1}^n \bigg| \sum_{j=(i-1) \ell + 1}^{i\ell} f_j \bigg|
	\leqslant  \sum_{j=1}^{n\ell} |f_j| = w ,
\end{align}
as claimed.
\hfill $ \blacksquare $
\end{proof}

For the purpose of illustration, we provide a concrete example of the vectors
$ \bs{e'} $ and $ \bs{e''} $ from the above proof.
The vector $ \bs{e''} $ will be crucial for the formulation of our zero-knowledge
protocol.

\begin{example}
\label{example}
Let $ n = 6 $, $ m = 7 $, and $ \mye = (-2, 0, 1, 3, -1, -1) \in \Z_7^6 $.
Then $ \ell = 3 $ and
\begin{align}
  \bs{e'} = (-1, -1, 0 \,|\, 0, 0, 0 \,|\, 1, 0, 0 \,|\, 1, 1, 1 \,| -\!1, 0, 0 \,| -\!1, 0, 0) \in \{-1, 0 , 1\}^{18} ,
\end{align}
where we have separated the blocks of $ \bs{e'} $ by vertical lines for clarity.
Suppose that, in the given instance of the syndrome decoding problem, $ w = 10 $.
Then $ \wtl(\bs{e'}) = \wtl(\bs{e}) = 8 < w $, so we need to modify $ \bs{e'} $
by the procedure described in the proof of Proposition \ref{thm:equiv} in order
to increase its Lee weight to $ w = 10 $.
We obtain
\begin{align}
  \bs{e''} = (-1, -1, 0 \,|\, {\color{blue}1, -1}, 0 \,|\, 1, 0, 0 \,|\, 1, 1, 1 \,| -\!1, 0, 0 \,| -\!1, 0, 0) \in \{-1, 0 , 1\}^{18}
\end{align}
with $ \wtl(\bs{e''}) = 10 = w $.
\myqed
\end{example}

\section{Zero-knowledge proof for the Lee syndrome decoding problem}
\label{sec:zkp}

We first recall the definitions of zero-knowledge proofs and commitment schemes.
For reasons of clarity and simplicity, we chose not to overburden the exposition
with too many technical details.
For example, we do not discuss and distinguish between the notions of computational/%
statistical/perfect security.
See, e.g., \cite{goldreich_book,thaler} for a fully formal treatment of these notions.

An interactive proof for a language $ L $ is a protocol whereby one party, the
Prover, can convince another party, the Verifier, that a given instance $ x $
belongs to $ L $.
Multiple rounds of interaction between the two parties are allowed, the Verifier
is assumed to be a probabilistic polynomial-time algorithm, and the following
properties are required to hold:
\begin{itemize}[noitemsep,topsep=0pt]
\item
\emph{Completeness}: If $ x \in L $, the Verifier will output `accept' with
probability $ 1 $.
\item
\emph{Soundness}: If $ x \notin L $, the Verifier will output `reject' with
probability at least $ 1 - \epsilon $, regardless of the strategy and the
computational resources of the Prover.
The parameter $ \epsilon $ is then called the soundness error.
\end{itemize}
An interactive proof is said to be a zero-knowledge (ZK) proof if it, in addition
to the above, possesses the following property:
\begin{itemize}[noitemsep,topsep=0pt]
\item
\emph{Zero-knowledge}: During the execution of the protocol, the Verifier learns
nothing beyond the fact that $ x \in L $.
\end{itemize}
Put differently, an interactive proof is zero-knowledge if the Verifier learns
nothing from the interaction with the Prover that it could not have computed by
itself prior to the interaction.

A commitment scheme \cite{blum} is one of the basic cryptographic primitives
commonly used in the design of zero-knowledge proofs \cite{goldreich}.
It enables one party, say the above-mentioned Prover, to choose an element $ z $
of a predefined set and send a ``masked'' version of it, $ \com(z) $, to the Verifier,
so that the following properties hold:
\begin{itemize}[leftmargin=2em,noitemsep,topsep=0pt]
\item
\emph{Binding property}: The Prover cannot change its mind at a later point, i.e.,
after the Prover decides to ``open'' $ \com(z) $, the Verifier will be confident
(with probability $ \approx\! 1 $) that the value it now sees is indeed the one the
Prover originally committed to.
\item
\emph{Hiding property}: The Verifier cannot infer anything about $ z $ from $ \com(z) $
alone (except with negligible probability).
\end{itemize}

\subsection{Zero-knowledge protocol for the balanced Lee syndrome decoding problem}

The protocol described next is inspired by the zero-knowledge proof for the
subset sum problem \cite{blocki}.
The idea behind it is based on the following general principle:
by using randomization, the Prover creates several possibilities which, together,
uniquely determine the secret solution (a.k.a.\ witness), but either of which
individually is random and gives no information about the solution.
Then the Prover can convince the Verifier that it possesses the solution by letting
the Verifier choose which of the possibilities it wishes the Prover to reveal.

\begin{protocol}[ZK proof for Problem \ref{problem:balsyn}]
\label{protocol}
\textnormal{
Let $ \myH \in \Z_m^{n\times (n-k)} $, $ \mys \in \Z_m^{n-k} $, and $ w \in 2\N $ with
$ w \leqslant n(\ell-1) $, be an instance of Problem~\ref{problem:balsyn} (given to
both the Prover and the Verifier), and $ \mye \in \Z_m^{n} $ a solution/witness (known
by the Prover only) satisfying $ \mye \myH = \mys  \pmod{m} $, $ \wtl(\mye) \leqslant w $,
and $ \sum_{i=1}^n e_i = 0 $.
Let $ \com(\cdot) $ be a commitment scheme.
\begin{enumerate}[itemsep=2mm]
\item
The Prover creates a vector $ \bs{f} \in \{-1, 0, 1\}^{n \ell} $ from the vector
$ \bs{e} $ as described in the proof of Proposition~\ref{thm:equiv}, by setting
$ \bs{f} = \bs{e''} $.
\item
The Prover selects a matrix $ \myR \in \Z_m^{n\times(n-k)} $ uniformly at random,
computes $ \bs{T} = \myH - \myR \pmod{m} $, and creates the matrices
$ \bs{\tilde{R}}, \bs{\tilde{T}} \in \Z_m^{n\ell\times(n-k)} $ from $ \bs{R}, \bs{T} $
as in \eqref{eq:Htilde}.
\item
The Prover computes $ \mya \defeq \mye \myR = \bs{f} \bs{\tilde{R}} \pmod{m} $ and
$ \myb \defeq \mye \bs{T} = \bs{f} \bs{\tilde{T}} \pmod{m} $ (see \eqref{eq:eg} and
\eqref{eq:gf}).
\item
The Prover selects a permutation $ \pi $ on $ \{1, \ldots, n\ell\} $ uniformly at random.
Let $ \bs{\tilde{R}}_\pi, \bs{\tilde{T}}_{\!\pi} $ be the row-permuted versions of
the matrices $ \bs{\tilde{R}}, \bs{\tilde{T}} $,
\begin{align}
  \bs{\tilde{R}} =
	\begin{pmatrix}
  \bs{\tilde{r}}_{1}  \\
	\vdots  \\
	\bs{\tilde{r}}_{n\ell}  \\
	\end{pmatrix}
	\quad \longrightarrow \quad
  \bs{\tilde{R}}_{\pi} =
	\begin{pmatrix}
  \bs{\tilde{r}}_{\pi(1)}  \\
	\vdots  \\
	\bs{\tilde{r}}_{\pi(n\ell)}  \\
	\end{pmatrix} ,
\end{align}
and $ \bs{f}_{\!\pi} = (f_{\pi(1)}, \ldots, f_{\pi(n\ell)}) $ the permuted version
of the vector $ \bs{f} $.
\item
The Prover computes the commitments
$ \com(\bs{R}) $, $ \com(\bs{T}) $, $ \com(\mya) $, $ \com(\myb) $, $ \com(\pi) $,
$ \com\!\big(\bs{\tilde{R}}_\pi\big) $, $ \com\!\big(\bs{\tilde{T}}_{\!\pi}\big) $,
$ \com(\bs{f}_{\!\pi}) $, and sends them to the Verifier.
\item
The Verifier can ask the Prover to do exactly one of the following:
\begin{enumerate}[topsep=2mm]
\item
Open the commitments $ \com(\bs{R}) $, $ \com(\bs{T}) $, $ \com(\pi) $,
$ \com\!\big(\bs{\tilde{R}}_\pi\big) $, $ \com\!\big(\bs{\tilde{T}}_{\!\pi}\big) $.
After receiving the reply, the Verifier checks whether the following conditions
are met:
\begin{enumerate}[topsep=2mm,leftmargin=1cm]
\item[(a1)]
$ \bs{R} + \bs{T} \stackrel{?}{=} \bs{H} \pmod{m} $,
\item[(a2)]
$ \bs{\tilde{R}}_\pi $, $ \bs{\tilde{T}}_{\!\pi} $ are formed properly,
i.e., by forming $ \bs{\tilde{R}}, \bs{\tilde{T}} $ from $ \bs{R}, \bs{T} $ as in
\eqref{eq:Htilde}, and then permuting their rows according to $ \pi $.
\end{enumerate}
\item
Open the commitments $ \com(\mya) $, $ \com(\myb) $,
$ \com\!\big(\bs{\tilde{R}}_\pi\big) $, $ \com(\bs{f}_{\!\pi}) $.
After receiving the reply, the Verifier checks whether the following conditions
are met:
\begin{enumerate}[topsep=2mm,leftmargin=1cm]
\item[(b1)]
$ \bs{a} + \bs{b} \stackrel{?}{=} \bs{s} \pmod{m} $,
\item[(b2)]
$ \bs{f}_{\!\pi} \bs{\tilde{R}}_\pi \stackrel{?}{=} \bs{a} \pmod{m} $,
\item[(b3)]
$ \wtl(\bs{f}_{\!\pi}) \stackrel{?}{=} w $,
\item[(b4)]
$ \bs{f}_{\!\pi} $ is balanced.
\end{enumerate}
\item
Open the commitments $ \com(\mya) $, $ \com(\myb) $,
$ \com\!\big(\bs{\tilde{T}}_{\!\pi}\big) $, $ \com(\bs{f}_{\!\pi}) $.
After receiving the reply, the Verifier checks whether the following conditions
are met:
\begin{enumerate}[topsep=2mm,leftmargin=1cm]
\item[(c1)]
$ \bs{a} + \bs{b} \stackrel{?}{=} \bs{s} \pmod{m} $,
\item[(c2)]
$ \bs{f}_{\!\pi} \bs{\tilde{T}}_{\!\pi} \stackrel{?}{=} \bs{b} \pmod{m} $,
\item[(c3)]
$ \wtl(\bs{f}_{\!\pi}) \stackrel{?}{=} w $,
\item[(c4)]
$ \bs{f}_{\!\pi} $ is balanced.
\end{enumerate}
\end{enumerate}
In each of the three options, the Verifier performs the corresponding range checks as
well, e.g., whether $ \bs{f}_{\!\pi} \stackrel{?}{\in} \{-1, 0, 1\}^{n\ell} $.
\item
The Verifier accepts iff all the checks in the selected option are satisfied.
\myqed
\end{enumerate}}

\end{protocol}

Intuitively, Protocol~\ref{protocol} is zero-knowledge because the Verifier obtains
from the Prover only random-looking objects that it could have generated by itself.

\begin{theorem}
Protocol~\ref{protocol} is a zero-knowledge proof for the balanced Lee syndrome
decoding problem.
The soundness error is at most $ 2/3 $.
\end{theorem}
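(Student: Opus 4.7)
The plan is to verify the three defining properties in turn: completeness, soundness with error at most $2/3$, and zero-knowledge. Completeness amounts to unpacking the definitions: given a witness $\bs{e}$, the vector $\bs{f}=\bs{e''}$ supplied by Proposition~\ref{thm:equiv} satisfies $\bs{f}\myHH=\bs{s}\pmod m$, $\wtl(\bs{f})=w$, and $\sum_j f_j=0$. Since $\bs{R}+\bs{T}=\bs{H}\pmod m$ by construction, one has $\bs{a}+\bs{b}=\bs{e}(\bs{R}+\bs{T})=\bs{e}\bs{H}=\bs{s}$, so (b1) and (c1) hold; the checks $\bs{f}_\pi\bs{\tilde{R}}_\pi=\bs{a}$ and $\bs{f}_\pi\bs{\tilde{T}}_\pi=\bs{b}$ hold because simultaneously permuting the coordinates of $\bs{f}$ and the rows of the matrix by the same $\pi$ does not change the product; Lee weight and balance are coordinate-permutation invariants, so (b3),(b4),(c3),(c4) pass; and (a1),(a2) follow directly from the constructions in steps 2 and 4 of the protocol.

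For the soundness bound I plan to run a three-challenge extraction argument by contradiction. Suppose that, on a `no' instance, a cheating Prover can answer all three Verifier challenges after a single commitment round. By the binding property of $\com(\cdot)$, the commitments common to several challenges (notably to $\bs{a},\bs{b},\bs{\tilde{R}}_\pi,\bs{\tilde{T}}_\pi,\bs{f}_\pi$) open to the same values each time. Challenge (a) yields $\bs{R}+\bs{T}=\bs{H}$ together with properly formed $\bs{\tilde{R}}_\pi,\bs{\tilde{T}}_\pi$, while (b) and (c) together yield a single $\bs{f}_\pi\in\{-1,0,1\}^{n\ell}$ that is balanced, has $\wtl(\bs{f}_\pi)=w$, and satisfies $\bs{f}_\pi\bs{\tilde{R}}_\pi=\bs{a}$, $\bs{f}_\pi\bs{\tilde{T}}_\pi=\bs{b}$, $\bs{a}+\bs{b}=\bs{s}$. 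Summing the matrix equations and using $\bs{\tilde{R}}_\pi+\bs{\tilde{T}}_\pi=(\bs{\tilde{R}}+\bs{\tilde{T}})_\pi=\myHH_\pi$ yields $\bs{f}_\pi\myHH_\pi=\bs{s}$, equivalently $\bs{f}\myHH=\bs{s}$ after undoing $\pi$. Proposition~\ref{thm:equiv} then produces a valid witness $\bs{e}$, a contradiction. Thus on every `no' instance at least one of the three equally-likely challenges fails, so the Verifier accepts with probability at most $2/3$.

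For zero-knowledge I would exhibit an efficient simulator $\mathcal{S}$ producing transcripts statistically indistinguishable from real ones without access to $\bs{e}$. The standard device is for $\mathcal{S}$ to guess the Verifier's challenge $c\in\{\mathrm{a},\mathrm{b},\mathrm{c}\}$ in advance and rewind on a wrong guess (succeeding in an expected constant number of attempts). For $c=\mathrm{a}$, $\mathcal{S}$ samples $\bs{R}$ uniformly, sets $\bs{T}=\bs{H}-\bs{R}\pmod m$, samples $\pi$ uniformly, forms $\bs{\tilde{R}}_\pi,\bs{\tilde{T}}_\pi$ as in \eqref{eq:Htilde}, and commits to arbitrary dummy values in the slots that will not be opened. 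For $c=\mathrm{b}$, $\mathcal{S}$ samples $\bs{R},\pi$ honestly (so that $\bs{\tilde{R}}_\pi$ has the correct distribution) and samples $\bs{f}_\pi$ in such a way that its joint distribution with $\bs{\tilde{R}}_\pi$ matches the one induced by the real protocol, then sets $\bs{a}=\bs{f}_\pi\bs{\tilde{R}}_\pi\pmod m$ and $\bs{b}=\bs{s}-\bs{a}\pmod m$; case $c=\mathrm{c}$ is symmetric. The step I expect to be the main obstacle is this distributional argument: one must verify that in a real transcript the opened quantities have exactly the marginal distributions generated by $\mathcal{S}$. Uniformity of $\bs{R}$ is immediate, $\bs{f}_\pi$ has the claimed marginal because $\pi$ is a uniform permutation of a fixed $\bs{f}$ with prescribed numbers of $\pm 1$'s and zeros, and the unopened commitments are indistinguishable from the simulated ones by the hiding property of $\com(\cdot)$; the crux is to handle the correlations between $\bs{\tilde{R}}_\pi$ and $\bs{f}_\pi$ induced by the shared permutation $\pi$.
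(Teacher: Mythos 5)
Your overall architecture---completeness by direct verification, soundness via a three-challenge extraction with error $2/3$, and zero-knowledge via a simulator---is the same as the paper's. The completeness and soundness parts match the published argument essentially step for step: the paper phrases soundness as a contrapositive rather than a contradiction and leaves the binding property of $\com(\cdot)$ implicit where you invoke it explicitly, but the core move---summing the opened relations to get $\bs{f}\myHH=\bs{s}\pmod m$ for a balanced ternary $\bs{f}$ of weight $w$ and then recovering a witness $\bs{e}$ by the block-wise accumulation from Proposition~\ref{thm:equiv}---is identical. So for those two properties your proposal is correct and takes the paper's route.

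The one step you leave open is exactly the distributional claim in the zero-knowledge part: that the pair $\big(\bs{\tilde{R}}_\pi,\bs{f}_{\!\pi}\big)$ produced by the simulator has the same joint law as in a real execution. The paper settles this by fiat: its simulator for challenge (b) draws $\bs{g}$ uniformly from the balanced weight-$w$ vectors in $\{-1,0,1\}^{n\ell}$ \emph{independently} of $\bs{\tilde{R}}_\pi$ and simply asserts that the resulting view is identical to the real one. Your hesitation here is warranted rather than a defect of your write-up. Unlike the masking vector in the subset-sum protocol of \cite{blocki}, the matrix $\bs{\tilde{R}}$ is not structureless: it consists of $n$ groups of $\ell$ identical rows, and for a uniformly random $\bs{R}$ this grouping is (with high probability) visible in $\bs{\tilde{R}}_\pi$ as its row-equality pattern, which reveals the unordered partition $\{\pi^{-1}(B_1),\ldots,\pi^{-1}(B_n)\}$ of positions into images of the original blocks. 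Reading $\bs{f}_{\!\pi}$ on the cells of that partition exposes the multiset of per-block compositions of $\bs{f}=\bs{e''}$, and since the $i$'th block has $p_i$ ones and $q_i$ negative ones with $p_i-q_i=e_i$, this is correlated with $\bs{e}$ and is not determined by $(\bs{H},\bs{s},w)$ alone. Hence the real $\bs{f}_{\!\pi}$ is \emph{not} independent of $\bs{\tilde{R}}_\pi$, and closing your gap requires either proving this leakage harmless (which is not obvious) or adjusting the protocol or simulator; the paper's proof does not supply the missing argument either. In short, your proposal reproduces the paper's proof wherever that proof is complete, and is honestly incomplete at the one step the paper asserts without justification.
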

\begin{proof}
We show that all properties of zero-knowledge proofs hold.

\emph{Completeness}.
If the Prover knows the solution $ \bs{e} \in \Z_m^n $ and follows the protocol,
the Verifier will accept the proof as valid because all of its checks will be
satisfied.
This follows from the following facts:
first, $ \bs{f} $ is, by construction, a balanced vector of weight $ \wtl(\bs{f}) = w $
and, hence, so is $ \bs{f}_{\!\pi} $;
second,
\begin{align}
  \bs{a} + \bs{b} = \bs{e} \big( \bs{R} + \bs{T} \big) = \bs{e} \bs{H} = \bs{s} \pmod{m} ;
\end{align}
and third,
\begin{align}
  \bs{f}_{\!\pi} \bs{\tilde{R}}_\pi = \bs{f} \bs{\tilde{R}} = \bs{a} \pmod{m} ,  \qquad
	\bs{f}_{\!\pi} \bs{\tilde{T}}_\pi = \bs{f} \bs{\tilde{T}} = \bs{b} \pmod{m} .
\end{align}

\emph{Soundness}.
We argue that, if the Prover does not possess a vector $ \bs{e} \in \Z_m^n $
satisfying $ \mye \myH = \mys  \pmod{m} $, $ \wtl(\mye) \leqslant w $, and $ \sum_{i=1}^n e_i = 0 $,
then at least one of the checks the Verifier performs will not be satisfied, implying
that if the Verifier selects from among the three options uniformly at random, the
Prover's attempt to cheat will be caught with probability at least $ 1/3 $.
In order to show the contrapositive, suppose that all the conditions (a1--2), (b1--4),
and (c1--4) are satisfied, so that it is impossible to catch the Prover.
This implies that the vector $ \bs{f} $ is balanced (by (b4) and (c4)), is of weight
$ \wtl(\bs{f}) = w $ (by (b3) and (c3)), and satisfies
\begin{align}
  \bs{f} \myHH = \bs{f} \big( \bs{\tilde{R}} + \bs{\tilde{T}} \big) = \bs{a} + \bs{b} = \bs{s} \pmod{m}
\end{align}
(by (a1--2), (b1--2), and (c1--2)).
Then, as shown in the proof of Proposition~\ref{thm:equiv}, its block-wise accumulation
$ \mye = (e_1, \ldots, e_n) $, defined by $ e_i = \sum_{j=(i-1) \ell + 1}^{i\ell} f_j $,
is a solution to the instance $ (\bs{H}, \bs{s}, w) $ of Problem~\ref{problem:balsyn}.
Furthermore, since the Prover knows $ \bs{f} $, it also knows $ \bs{e} $.
Put differently, if the Prover does not possess a solution to the instance
$ (\bs{H}, \bs{s}, w) $ of Problem~\ref{problem:balsyn}, at least one of the conditions
(a1--2), (b1--4), (c1--4) will not be satisfied, meaning that the Verifier will reject
with probability at least $ 1/3 $.

\emph{Zero-knowledge}.
To show that the Verifier learns nothing beyond the fact that the instance
$ (\bs{H}, \bs{s}, w) $ of Problem~\ref{problem:balsyn} has a solution, we argue that
it could have by itself simulated the entire exchange.
To simulate choice (a), the Verifier selects a random matrix $ \myR \in \Z_m^{n\times(n-k)} $,
computes $ \bs{T} = \myH - \myR \pmod{m} $, then selects a random permutation $ \pi $
on $ \{1, \ldots, n\ell\} $ and creates $ \bs{\tilde{R}}_\pi, \bs{\tilde{T}}_{\!\pi} $.
To simulate choice (b), the Verifier selects a random matrix $ \myR \in \Z_m^{n\times(n-k)} $,
creates $ \bs{\tilde{R}} $ and permutes it randomly to obtain $ \bs{\tilde{R}}_{\pi} $.
It then selects uniformly at random a vector $ \bs{g} $ from the set of all balanced
vectors of weight $ w $ from $ \{-1, 0, 1\}^{n\ell} $, and then computes
$ \bs{a} = \bs{g} \bs{\tilde{R}}_{\pi} $ and $ \bs{b} = \bs{s} - \bs{a} \pmod{m} $.
The choice (c) is similar.
It follows that the view of the Verifier during the simulation is identical to its
view during the actual protocol execution.
This, together with the hiding property of $ \com(\cdot) $ (which means that nothing
about $ \bs{R} $, $ \bs{T} $, $ \bs{a} $, $ \bs{b} $, $ \pi $, $\bs{f}_{\!\pi} $ is
revealed from the commitments to these quantities), implies that Protocol~\ref{protocol}
is zero-knowledge.

\emph{Polynomial complexity}.
Finally, notice that the total length of all messages exchanged in the protocol is
polynomial in the size of the problem instance, and that all the checks the Verifier
needs to perform can be done in polynomial time.
\hfill $ \blacksquare $
\end{proof}

As usual, the soundness error can be reduced below the desired threshold by repeating
the protocol sufficiently many times, each time with fresh randomness.
Namely, $ t $ repetitions of the above protocol result in the soundness error not
exceeding $ (2/3)^{t} $.

\subsection{Remarks on Protocol \ref{protocol}}

We next state several remarks on the above protocol and the assumptions that we
adopted along the way.
\begin{enumerate}[itemsep=2mm]
\item
The presented zero-knowledge proof is in fact a zero-knowledge \emph{proof of knowledge}
\cite{goldreich,thaler}.
Namely, using Protocol~\ref{protocol}, the Prover convinces the Verifier not only
that a solution to the given instance of Problem~\ref{problem:balsyn} exists, but
that the Prover \emph{knows} it.
This fact can be seen from the proof of the soundness property where we implicitly
built an ``extractor'' that, given access to the prover algorithm with a rewind option,
finds a witness with high probability.
In other words, we have in fact proved the stronger ``knowledge soundness'' property.

\item
The presented zero-knowledge proof can be applied to the general syndrome
decoding problem (Problem~\ref{problem:syn}) by using the reduction from
Proposition~\ref{thm:NPcomplete}.
Namely, given an instance $ (\bs{H}, \bs{s}, w) $ of Problem~\ref{problem:syn},
the Prover and the Verifier would create an instance
$ (\bs{H^{\sml{\pm}}}, 2\bs{\bar{s}}, 2w) $ of Problem~\ref{problem:balsyn}.
Further, possessing a witness $ \bs{e} $, the Prover would form a balanced vector
$ (\bs{\bar{e}} \,| -\!\bs{\bar{e}}) $ (a witness for the just created instance
of Problem~\ref{problem:balsyn}), and then proceed with the protocol as described
above.
We have chosen to state the protocol for Problem~\ref{problem:balsyn} in order
to simplify the exposition, as well as to emphasize the need for the additional
assumptions adopted in Problem~\ref{problem:balsyn} for the execution of the
protocol.

\item
Why is the transformation of $ \bs{e} \in \Z_m^n $ to
$ \bs{f} = \bs{e''} \in \{-1, 0, 1\}^{n\ell} $ performed by the Prover?
For example, why doesn't the Prover select a random permutation $ \mu $ on
$ \{1, \ldots, n\} $ and commit to $ \bs{R}_\mu, \bs{T}_{\!\mu}, \bs{e}_\mu $
(instead of $ \bs{\tilde{R}}_\pi, \bs{\tilde{T}}_{\!\pi}, \bs{f}_{\!\pi} $)?
It is easy to see that such a protocol would not possess the zero-knowledge
property because sending $ \bs{e}_\mu $ to the Verifier would reveal the
composition of the vector $ \bs{e} $.

\item
Why are the assumptions that $ \bs{e} $ is balanced and of weight upper-bounded
by $ n (\ell - 1) $ needed?
Under these two assumptions, one is guaranteed to be able to transform $ \bs{e} \in \Z_m^n $
to a vector $ \bs{f} = \bs{e''} \in \{-1, 0, 1\}^{n\ell} $ that is also balanced
and of weight exactly $ w $ (see the proof of Proposition~\ref{thm:equiv}).
A random permutation of such a vector, $ \bs{f}_{\!\pi} $, which the Verifier may
require to see, is a random vector containing $ w/2 $ ones, $ w/2 $ negative ones,
and $ n\ell - w $ zeros, regardless of $ \bs{e} $.
As such, it reveals absolutely nothing about $ \bs{e} $ that the Verifier didn't
already know.

\item
Why are the Verifier's queries (b) and (c) not combined into a single query where
the Prover would be asked to open $ \com(\mya) $, $ \com(\myb) $,
$ \com\!\big(\bs{\tilde{R}}_\pi\big) $, $ \com\!\big(\bs{\tilde{T}}_{\!\pi}\big) $,
and $ \com(\bs{f}_{\!\pi}) $?
The reason is that $ \bs{\tilde{R}} $ and $ \bs{\tilde{T}} $ are not independent
(because their sum is the matrix $ \bs{\tilde{H}} $ known to the Verifier), and
hence the Verifier might, after seeing $ \bs{\tilde{R}}_\pi $ and $ \bs{\tilde{T}}_{\!\pi} $,
learn something about the permutation $ \pi $, and then from $ \bs{f}_{\!\pi} $
learn something about $ \bs{f} = \bs{e''} $, and therefore about $ \bs{e} $.
In other words, such a protocol would not be zero-knowledge.
Likewise, the queries (a) and (b) (or (a) and (c)) cannot be combined as the
corresponding commitment openings would trivially reveal $ \bs{f} $, and therefore
$ \bs{e} $, to the Verifier.

\item
The soundness error of Protocol \ref{protocol} cannot be reduced by simply removing
one of the queries (a),(b),(c) of the Verifier, or by combining two of them into one.
As explained in the previous remark, combining any two queries would result in a protocol
that is not zero-knowledge.
It is also easy to see that, after removing any of the queries, the protocol would
no longer be sound, i.e., the Prover would be able to cheat the Verifier with probability
$ 1 $.

\item
In the case of even $ m = 2 \ell $, the elements of $ \Z_m $ are again represented
by the integers $ \{-\ell, \ldots, \ell\} $ and the analysis proceeds in the same
way.
A minor issue is that now $ -\ell $ and $ \ell $ represent the same element of $ \Z_m $
and either of these two representations may be chosen freely in any situation.
This, however, does not cause any inconsistencies in the protocol.

The only place where a modification is needed for even $ m $ is the reduction
described in the proof of Proposition~\ref{thm:NPcomplete}.
Namely, in this case, $ 2 $ is not invertible in $ \Z_m $.
This can be circumvented by defining
\begin{align}
\label{eq:Hpm2}
  \bs{H^{\sml{\pm}}} =
  \begin{pmatrix}
     \bs{\bar{H}}  \\
  \hline
     -(c-1)\bs{\bar{H}}\,
  \end{pmatrix} ,
\end{align}
where $ c $ is any number satisfying $ \gcd(m,c) = 1 $.
Now the same steps as in the proof of Proposition~\ref{thm:NPcomplete} establish
that $ (\myH, \mys, w) $ is a `yes' instance of the Lee syndrome decoding problem
(Problem~\ref{problem:syn}) if and only if $ (\bs{H^{\sml{\pm}}}, c\bs{\bar{s}}, 2w) $
is a `yes' instance of the balanced Lee syndrome decoding problem (Problem~\ref{problem:balsyn}).

\item
The communication complexity of Protocol~\ref{protocol} is determined by steps 5 and 6.
Namely, the Prover needs to send $ 8 $ commitments and up to $ 5 $ openings.
Assuming that the commitments are of constant size (e.g., hashes), the communication
complexity is dominated by the openings $ \bs{R} $, $ \bs{T} $, $ \pi $, $ \bs{\tilde{R}}_\pi $,
$ \bs{\tilde{T}}_{\!\pi} $, whose total size in bits is
$ 2 n (n-k) \log_2(m) + n\ell \log_2(n\ell) + 2 n \ell (n-k) \log_2(m) $.
An example with concrete numbers is hard to give at this point because the values of
the parameters needed for a desired security level for the Lee syndrome decoding problem
do not seem to have been thoroughly explored yet.
In \cite[Example 19]{horlemann2}, codes with $ n = 425 $, $ k = 229 $ over $ \Z_4 $
(i.e., with $ m = 4 $ and $ \ell = \lfloor m/2 \rfloor $ = 2) are claimed to exist
for which the security level of $ 128 $ bits is reached.
Using these values, the above expression for the communication complexity equals
approximately $ 1 $ Mb.

\end{enumerate}

\section{Conclusion}
\label{sec:conclusion}

Motivated by the relevance of both code-based (and, more generally, post-quantum)
cryptography and zero-knowledge proofs, we have described a zero-knowledge proof
of knowledge for the Lee-metric version of the syndrome decoding problem.
Our intention was to present a conceptually simple protocol, rather than to propose
a performance-competitive solution, which is why we have not attempted to optimize
the protocol for efficiency.
In particular, as commented in Remark 8 above, the communication costs are quite
high, and the protocol should therefore be regarded as a preliminary solution.
Its potential use in practical cryptographic constructions would require detailed
optimization of the whole procedure and the various involved parameters, as well as
conversion to a non-interactive form (e.g., via the Fiat--Shamir transform \cite{fiat+shamir}).
Further efficiency improvements could be achieved by using standard techniques such
as weighted challenges (used to reduce the number of repetitions needed for soundness
amplification) and seed trees (which allow for compact representation of randomness
in multi-round protocols), as in, e.g., \cite{baldi_NIST}.

Our further work on the subject will focus on developing more efficient non-interactive
versions of the protocol, suitable for modern applications.
This will also include exploring alternative approaches, such as the shared permutation
framework \cite{feneuil,bidoux2}, or the MPC-in-the-Head paradigm and its efficient
instantiations such as VOLE-in-the-Head \cite{baum} or TC-in-the-Head \cite{feneuil2}.

\vspace{5mm}
{\small
\subsubsection{\small Acknowledgments.}
This research was supported by the Ministry of Science, Technological Development
and Innovation of the Republic of Serbia (contract no. 451-03-137/2025-03/200156)
and by the Faculty of Technical Sciences, University of Novi Sad, Serbia (project
no. 01-50/295).

}

%
%

\end{document}